\theoremstyle{plain}
\newtheorem{theorem}{Theorem}
\newtheorem{corollary}[theorem]{Corollary}
\newtheorem{definition}{Definition}
\newtheorem{remark}[theorem]{Remark}
\crefname{corollary}{Corollary}{Corollaries}
\Crefname{corollary}{Corollary}{Corollaries}
\newcommand{\rank}{\mathrm{rank}}
\newcommand{\ttrank}{\mathrm{TT\textnormal{-}rank}}
\newcommand{\BL}{\bm{M}}
\newcommand{\BC}{\bm{C}}
\newcommand{\BR}{\bm{R}}
\newcommand{\BW}{\bm{W}}
\newcommand{\BM}{\bm{M}}
\newcommand{\BU}{\bm{U}}
\newcommand{\BV}{\bm{V}}
\newcommand{\BI}{\mathbb{I}}
\newcommand{\R}{\mathbb{R}}
\newcommand{\BSigma}{\bm{\Sigma}}
\newcommand{\ang}[1]{{\langle #1 \rangle}}
\let\oldsqrt\sqrt 
\def\sqrt{\mathpalette\DHLhksqrt} 
\def\DHLhksqrt#1#2{%
\setbox0=\hbox{$#1\oldsqrt{#2\,}$}\dimen0=\ht0
\advance\dimen0-0.2\ht0
\setbox2=\hbox{\vrule height\ht0 depth -\dimen0}%
{\box0\lower0.4pt\box2}}
\begin{document}
\title{Property Inheritance for Subtensors in \\Tensor Train Decompositions} 

\author{%
  \IEEEauthorblockN{HanQin Cai}
  \IEEEauthorblockA{Department of Statistics and Data Science \\ Department of Computer Science \\
                    University of Central Florida\\
                    Orlando 32816, USA\\
                    Email: {hqcai@ucf.edu}
                    }
  \and
  \IEEEauthorblockN{Longxiu Huang}
  \IEEEauthorblockA{Department of Computational Mathematics, Science and Engineering \\
  Department of Mathematics\\
                   Michigan State University\\
                   East Lansing 48824, USA\\
                    Email: {huangl3@msu.edu}
                    }
}

\maketitle


\begin{abstract}
Tensor dimensionality reduction is one of the fundamental tools for modern data science. To address the high computational overhead, fiber-wise sampled subtensors that preserve the original tensor rank are often used in designing efficient and scalable tensor dimensionality reduction. However, the theory of property inheritance for subtensors is still underdevelopment, that is, how the essential properties of the original tensor will be passed to its subtensors. This paper theoretically studies the property inheritance of the two key tensor properties, namely incoherence and condition number, under the tensor train setting. We also show how tensor train rank is preserved through fiber-wise sampling. The key parameters introduced in theorems are numerically evaluated under various settings. The results show that the properties of interest can be well preserved to the subtensors formed via fiber-wise sampling. Overall, this paper provides several handy analytic tools for developing efficient tensor analysis methods.
\end{abstract}

\section{Introduction}
The analytic study for tensor data, i.e., multi-dimensional array of numbers, has received much attention since last decade. Among many others, dimensionality reduction is one of the most popular methodologies for tensor data analysis, that is, to find a low-rank expression of the given tensor.
Unlike the standard matrix rank definition, various ranks have been proposed for tensors; for instance, CP rank \cite{carroll1970analysis}, Tucker rank \cite{de2000multilinear}, tubal rank \cite{kilmer2011factorization}, and tensor train rank \cite{oseledets2009breaking}. 
 The tensor dimensionality reductions have found a wide range of applications, such as signal processing \cite{cichocki2015tensor,sidiropoulos2017tensor,chen2024Laplacian,chen2024correlating,wang2024TensorDynSamp}, computer vision \cite{vasilescu2002multilinear,yan2006multilinear,cai2021rtcur,tan2023tubal,chen2024coseparable}, social networks  \cite{kassab2021detecting,cai2024rtcur_journal,kassab2024sparseness}, and bioinformatics \cite{yener2008multiway,omberg2009global,cong2015tensor,hore2016tensor}. 

However, one major challenge for tensor dimensionality reduction is the high computational complexity associated with the complication of the high-dimensional structure. Recently, many Nyström-style methods have been developed for tensor and matrix dimensionality reductions under various tensor rank settings  {\cite{lidiak2022quantum,chen2022tensor,cai2021mode,hamm2020perspectives}}. These methods can significantly reduce the computational complexity of the dimensionality reductions via constructing and utilizing appropriated subtensors and submatrices which are potentially compact-sized. It is clear that good subtensors and submatrices are key to the success of Nyström-style dimensionality reductions.
While many studies have focused on the approximation theory aspect of the tensor Nyström-style methods, such as sampling and perturbation 
analysis  \cite{ghahremani2024cross,cai2021mode}, the theory of \textit{property inheritance}, i.e., how the essential tensor properties such as incoherence and condition number, will be passed to subtensors, is still underdeveloped. 

To address the vacancy in tensor theory, this paper studies the property inheritance for certain types of subtensors under the tensor train (TT) rank setting \cite{oseledets2009breaking}. Following the common settings, we assume the original tensor is exactly low rank, and we focus on the subtensors that keep the original TT rank of the tensor. With the sequential nature of tensor train decomposition, we present results to explain how the ranks are sequentially determined in subtensors (see \Cref{thm:rank-pre_1,thm:rank-pre_i}.) The subtensor property inheritance for incoherence and condition number is then presented in \Cref{thm:mu_of_R,thm:mu_of_C}. In \Cref{sec:numerical}, we empirically evaluate the values of key parameters introduced in \Cref{thm:mu_of_R,thm:mu_of_C}. The numerical results show that the tensor properties are well preserved with a simple sampling method. These results develop a deeper understanding of the properties of subtensors and can benefit future studies in related fields. 

\section{Notation and Preliminaries}
We start with some basic notation. Distinct typefaces are used for different numerical structures. Specifically, calligraphic capital letters (e.g., $\mathcal{T}$) represent tensors, boldface capital letters (e.g., $\BL$) denote matrices, regular capital letters (e.g., $I$) denote index sets, boldface lowercase letters (e.g., $\bm{v}$) are used for vectors, regular lower case letters (e.g., $\alpha$) indicate scalars. The set of the first \( d \) natural numbers is denoted by \([d] := \{1, \dots, d\} \).  

For a tensor $\mathcal{T}$, the notation $\mathcal{T}(I, :, \cdots, :)$ refers to slicing or extracting a subset of the tensor where the indices in the first mode are restricted to $I$, while all indices in the other modes are selected. For a matrix $\BM \in \mathbb{R}^{n_1 \times n_2}$, we use the following notations:  \begin{itemize}
     \item $\BM(I, :)$ denotes the $|I| \times n_2$ row submatrix of $\BM$ consisting only of the rows indexed by $I \subseteq [n_1]$;  \item $\BM(:, J)$ denotes the $n_1 \times |J|$ column submatrix of $\BM$ consisting only of the columns indexed by $J \subseteq [n_2]$;  \item $\BM(I, J)$ represents the $|I| \times |J|$ submatrix containing the entries $a_{ij}$ of $\BM$ for which $(i, j) \in I \times J$. 
 \end{itemize}
The \( n \times n \) identity matrix is denoted as \( \BI_n \). We reserve the letters $\BW_{\BM}$ and $\BV_{\BM}$ to denote the left and right singular vectors of a matrix $\BM$. Finally, $\BM^\dagger$ denotes the Moore-Penrose pseudoinverse of $\BM$.


For the matrices, the two essential properties of interest, incoherence and condition number, are defined as follows.
\begin{definition}[Matrix incoherence and condition number]
Let $\BL\in\R^{n_1\times n_2}$ be a rank-$r$ matrix, and let $\BL={\BW}_{\BL}\bm{\BSigma}_{\BL}{\BV}_{\BL}^\top$ be its compact SVD.  Then $\BL$ is said $\left\{\mu_{1,\BL},\mu_{2,\BL}\right\}$-incoherent (i.e., $\mu_{1,\BL}$-column-incoherent and $\mu_{2,\BL}$-row-incoherent) for some constants $\mu_{1,\BL}$ and $\mu_{2,\BL}$ such that
\begin{equation*}\label{EQN:A1}
   \begin{aligned} \left\| {\BW}_{\BL}\right\|_{2,\infty}\leq&\sqrt{\frac{\mu_{1,\BL} r}{n_1}} \quad\textnormal{and}\quad \left\|\bm{{\BV}}_{\BL}\right\|_{2,\infty}\leq &\sqrt{\frac{\mu_{2,\BL} r}{n_2}}. 
   \end{aligned} 
\end{equation*}
The condition number $\kappa_{\BL}$ is defined as $$\kappa_{\BL}:= \frac{\sigma_{1,\BL}}{\sigma_{r,\BL}},$$  where $\sigma_{i,\BL}$ is the $i$-th largest singular value  of $\BL$.
\end{definition}

In the authors' prior work \cite{cai2021robust}, property inheritance for submatrices has been thoroughly studied, i.e., how the incoherence
and condition number of a given matrix will transfer to its row and column
submatrices. The results can be summarized as the following theorem. 

\begin{theorem}[Section~3 of \cite{cai2021robust}] \label{THM:BetaBound}
Suppose $\BL\in\R^{n_1\times n_2}$ is rank-$r$ and $\{\mu_{1,\BL},\mu_{2,\BL}\}$-incoherent. Choose index set $I\subseteq[n_1]$ such that the row submatrix $\BR=\BL(I,:)$ is also rank-$r$. Then it holds
\begin{align*}
\mu_{2,\BR}&\leq\mu_{2,\BL}, \\
\mu_{1,\BR}&\leq\alpha^2\kappa_{\BL}^2\mu_{1,\BL},\\
\kappa_{\BR}&\leq\alpha\sqrt{\mu_{1,\BL} r}\kappa_{\BL},
\end{align*}
where $\alpha:=\sqrt{\frac{|I|}{n_1}}\left\|{\BW}_{\BL}(I,:)^\dagger\right\|_2$. 
Similarly, choose index set $J\subseteq[n_2]$ such that the column submatrix $\BC=\BL(:,J)$ is also rank-$r$. Then it holds
\begin{align*}
\mu_{1,\BC}&\leq\mu_{1,\BL}, \\
\mu_{2,\BC}&\leq\beta^2\kappa_{\BL}^2\mu_{2,\BL},\\
\kappa_{\BC}&\leq\beta\sqrt{\mu_{2,\BL} r}\kappa_{\BL},
\end{align*}
where $\beta:=\sqrt{\frac{|J|}{n_2}}\left\|{\BV}_{\BL}(J,:)^\dagger\right\|_2$.
\end{theorem}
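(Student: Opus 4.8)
The plan is to push everything through the SVD and reduce the whole statement to an analysis of the small factor $\BW_\BL(I,:)$. Restricting the compact SVD $\BL = \BW_\BL\BSigma_\BL\BV_\BL^\top$ to the rows indexed by $I$ gives $\BR = \BW_I\BSigma_\BL\BV_\BL^\top$ with $\BW_I := \BW_\BL(I,:)\in\R^{|I|\times r}$; since $\BR$ is rank-$r$ while $\BSigma_\BL\BV_\BL^\top$ already is, $\BW_I$ must have full column rank $r$. Introduce $\BB := \BW_I\BSigma_\BL$ together with its compact SVD $\BB = \BW_\BB\BSigma_\BB\BV_\BB^\top$. Because $\BV_\BL\BV_\BB\in\R^{n_2\times r}$ has orthonormal columns, $\BR = \BW_\BB\BSigma_\BB(\BV_\BL\BV_\BB)^\top$ is itself a compact SVD of $\BR$; reading it off yields $\BW_\BR = \BW_\BB$, $\sigma_{i,\BR} = \sigma_{i,\BB}$ for every $i$, and $\BV_\BR = \BV_\BL\BV_\BB$. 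This collapses the condition number and both incoherence constants of $\BR$ into the spectral behaviour of the $|I|\times r$ matrix $\BB$, which is governed entirely by $\BW_I$ and $\BSigma_\BL$.

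The row-incoherence bound is then immediate: since $\BV_\BB$ is orthogonal, $\|\BV_\BR\|_{2,\infty} = \|\BV_\BL\BV_\BB\|_{2,\infty} = \|\BV_\BL\|_{2,\infty}\le\sqrt{\mu_{2,\BL}r/n_2}$, and as $\BR$ has $n_2$ columns this forces $\mu_{2,\BR}\le\mu_{2,\BL}$. For the condition number I would bound the extreme singular values of $\BB$. Using incoherence rather than the crude estimate $\|\BW_I\|_2\le1$, the submatrix inequality $\|\BW_I\|_2\le\sqrt{|I|}\,\|\BW_I\|_{2,\infty}\le\sqrt{|I|\mu_{1,\BL}r/n_1}$ gives $\sigma_{1,\BR}=\sigma_{1,\BB}\le\|\BW_I\|_2\,\sigma_{1,\BL}$. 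For the smallest nonzero singular value, factoring out $\BSigma_\BL$ and using that $\sigma_{\min}(\BW_I)=1/\|\BW_I^\dagger\|_2$ for a full-column-rank matrix gives $\sigma_{r,\BR}=\sigma_{r,\BB}\ge\sigma_{r,\BL}/\|\BW_I^\dagger\|_2$. Dividing the two estimates and recognizing $\alpha=\sqrt{|I|/n_1}\,\|\BW_I^\dagger\|_2$ produces $\kappa_\BR\le\alpha\sqrt{\mu_{1,\BL}r}\,\kappa_\BL$.

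The hardest step, which I expect to be the main obstacle, is the column-incoherence bound $\mu_{1,\BR}\le\alpha^2\kappa_\BL^2\mu_{1,\BL}$, since it requires controlling the left singular vectors $\BW_\BR=\BW_\BB$ in the $\|\cdot\|_{2,\infty}$ norm rather than in spectral norm. Inverting the SVD of $\BB$ gives $\BW_\BB = \BB\BV_\BB\BSigma_\BB^{-1} = \BW_I(\BSigma_\BL\BV_\BB\BSigma_\BB^{-1})$, so the row norms factor as $\|\BW_\BB\|_{2,\infty}\le\|\BW_I\|_{2,\infty}\,\|\BSigma_\BL\BV_\BB\BSigma_\BB^{-1}\|_2$. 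The delicate piece is the correction factor $\|\BSigma_\BL\BV_\BB\BSigma_\BB^{-1}\|_2\le\sigma_{1,\BL}/\sigma_{r,\BB}\le\kappa_\BL\|\BW_I^\dagger\|_2$, where the lower bound on $\sigma_{r,\BB}$ from the previous step is precisely what injects the factor $\kappa_\BL\|\BW_I^\dagger\|_2$. Combining with $\|\BW_I\|_{2,\infty}\le\sqrt{\mu_{1,\BL}r/n_1}$ and comparing against the definition $\|\BW_\BR\|_{2,\infty}\le\sqrt{\mu_{1,\BR}r/|I|}$ yields $\mu_{1,\BR}\le(|I|/n_1)\|\BW_I^\dagger\|_2^2\,\kappa_\BL^2\mu_{1,\BL}=\alpha^2\kappa_\BL^2\mu_{1,\BL}$.

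Finally, the three column-submatrix inequalities require no new argument: applying the row results to $\BL^\top$ with index set $J$ interchanges the roles of the left and right singular vectors (so $\mu_{1}\leftrightarrow\mu_{2}$), leaves $\kappa$ unchanged, and turns $\alpha$ into $\beta=\sqrt{|J|/n_2}\,\|\BV_\BL(J,:)^\dagger\|_2$, which reproduces the stated bounds for $\BC=\BL(:,J)=(\BL^\top(J,:))^\top$.
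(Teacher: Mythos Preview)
Your proof is correct. The paper does not supply its own proof of this theorem: it is quoted verbatim as a result from prior work \cite{cai2021robust} and is used only as a black box in the proofs of \Cref{thm:mu_of_R,thm:mu_of_C}. Consequently there is no in-paper argument to compare against, but your SVD-based approach---reading off $\BW_\BR$, $\BSigma_\BR$, $\BV_\BR$ from the compact SVD of $\BW_I\BSigma_\BL$ and then bounding $\sigma_1(\BW_I)$ via incoherence and $\sigma_r(\BW_I)$ via $\|\BW_I^\dagger\|_2$---is exactly the standard route and matches what one finds in \cite{cai2021robust}.
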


These results are handy tools for matrix analysis that involves submatrices. Naturally, researchers want to extend it to tensor settings. In fact, some recent work has successfully extended \Cref{THM:BetaBound} to tensors under tubal setting \cite{salut2023tensor,su2024guaranteed}. However, the study we proposed in this paper, i.e., extension to tensors in the TT setting, is rather complicated since the subtensors are obtained through sequential operations, just like TT decomposition itself. 

Next, we introduce some basic preliminaries for tensor and tensor-train decomposition. For a thorough introduction, we refer the reader to \cite{kolda2009tensor,oseledets2009breaking}.

\begin{definition}[Mode-$k$ product] Let $\mathcal{T}\in\mathbb{R}^{n_1\times n_2\times \cdots\times n_d}$ and $\BM\in\mathbb{R}^{J\times n_k}$, the multiplication between $\mathcal{T}$ on its $k$-th mode with $\BM$ is denoted as $\mathcal{X}=\mathcal{T}\times_k \BM$ with 
\begin{align*}
   &\mathcal{X}(i_1,\cdots,i_{k-1},j,i_{k+1},\cdots,i_{d})\\
   :=&\sum_{s=1}^{n_k}\mathcal{T}(i_1,\cdots,i_{k-1},s,i_{k+1},\cdots,i_{d})\BM(j,s). 
\end{align*}

\end{definition}
\begin{definition}[TT rank and $i$-th tensor unfolding]
   Let $\mathcal{T} \in \mathbb{R}^{n_1 \times n_2 \times \cdots \times n_d}$. The TT-rank of $\mathcal{T}$ is defined as $(r_1, r_2, \cdots, r_{d-1})$, where 
$r_i := \rank(\mathcal{T}_{\ang{i}})$, and the $i$-th tensor unfolding $\mathcal{T}_{\ang{i}} \in \mathbb{R}^{(n_1 \cdots n_i) \times (n_{i+1} \cdots n_d)}$ is obtained as:
\[
\mathcal{T}_{\langle i\rangle}(j_1 \cdots j_i, j_{i+1} \cdots j_d) = \mathcal{T}(j_1, j_2, \cdots, j_d).
\]
\end{definition}

Note the $i$-th tensor unfolding defined here is different from the mode-$i$ tensor unfolding.

\begin{definition}[TT decomposition] \label{def:TTD}
The TT decomposition of a tensor $\mathcal{T} \in \mathbb{R}^{n_1 \times n_2 \times \cdots \times n_d}$ with TT-rank $(r_1, r_2, \cdots, r_{d-1})$ is expressed as:
\[
\mathcal{T} := \mathcal{T}_1  \bullet \mathcal{T}_2\bullet   \cdots \bullet \mathcal{T}_d,
\]
where $\mathcal{T}_i \in \mathbb{R}^{r_{i-1} \times n_i \times r_i}$ are the core tensors, and $r_0 = r_d = 1$. Specifically, for $j_i \in [n_i]$, an entry of $\mathcal{T}$ can be written as:
\[
\mathcal{T}(j_1, j_2, \cdots, j_d) = \mathcal{T}_1(:, j_1, :) \mathcal{T}_2(:, j_2, :) \cdots \mathcal{T}_d(:, j_d, :).
\]
\end{definition}

To help the reader better understand, \Cref{fig:TT-TD} is included here to illustrate TT decomposition.

\begin{figure}[t]
    \centering
    \includegraphics[width=\linewidth]{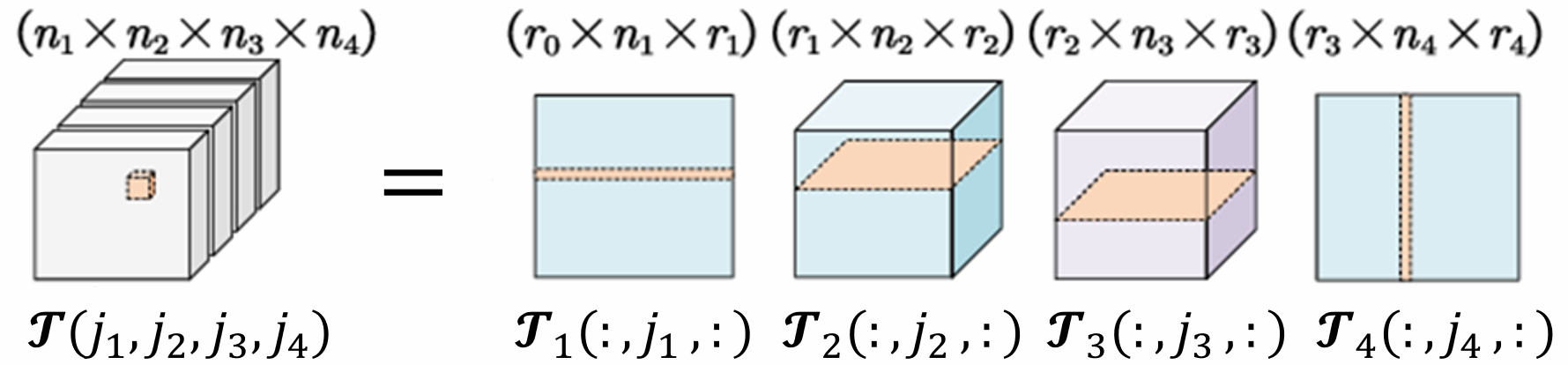}
    \caption{\cite{yin2021towards}. Visual representation of tensor train (TT) decomposition for a 4-order tensor. Note that $r_0=r_4=1$.}
    \label{fig:TT-TD}
\end{figure}

\begin{definition}[TT incoherence]
 Let $\mathcal{T}\in\mathbb{R}^{n_1\times n_2\times\cdots\times n_d}$ with $\ttrank(\mathcal{T})=(r_1,r_2,\cdots,r_{d-1})$. Then $\mathcal{T}$ is said $\{\bm{\mu}_{1,\mathcal{T}}, \bm{\mu}_{2,\mathcal{T}}\}$-incoherent, where $\bm{\mu}_{1,\mathcal{T}}, \bm{\mu}_{2,\mathcal{T}}\in\mathbb{R}^{d-1}$ and
\[
\bm{\mu}_{1,\mathcal{T}}(i) := \mu_{1,\mathcal{T}_{\ang{i}}}, \quad \bm{\mu}_{2,\mathcal{T}}(i) := \mu_{2,\mathcal{T}_{\ang{i}}}, \quad \forall i \in[d-1]. 
\]
\end{definition}

 



\section{Main Results}
This section presents the main theoretical results for subtensor property inheritance under the tensor train (TT) setting.  

Under the matrix setting, property inheritance is studied for row/column-wise sampled submatrices with the same rank as the original matrix \cite{cai2021robust}. The reason is simple: if a submatrix has the rank of the original matrix, then it spans the same linear subspace, thus preserving the subspace information of the original matrix. 
Similarly, this paper aims at the subtensors with the same TT rank as the original tensor. 
As shown in \Cref{def:TTD}, TT decomposition is computed as a series of operations on tensor dimensions in a sequential order. 
This contrasts with some other tensor decompositions, such as CP and Tucker, whose operations are in no particular order of dimensions. 
Hence, to study the subtensors of interest, we must understand how fiber-wise sampling in the earlier dimension may impact the rank in later dimensions in TT decomposition.  

As a base case, this study finds that if we fiber-wise samples along the first dimension to form a subtensor such that that rank of the mode-1 unfolding, i.e., $r_1$, is preserved, then this subtensor will also preserve the rest of TT rank.
This result is presented as \Cref{thm:rank-pre_1}.

\begin{theorem}\label{thm:rank-pre_1}
  Let $\mathcal{T}\in\mathbb{R}^{n_1\times n_2\times\cdots\times n_d}$ with $\ttrank(\mathcal{T})=(r_1,r_2,\cdots,r_{d-1})$. Suppose that the index set $I\subseteq[n_1]$ is chosen such that $\rank(\mathcal{T}_{\ang{1}}(I,:))=r_1$. Then $\ttrank(\mathcal{T}(I,:,\cdots,:))=(r_1,r_2,\cdots,r_{d-1})$.  
\end{theorem}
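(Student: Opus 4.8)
The plan is to fix a TT decomposition of $\mathcal{T}$ realizing the minimal ranks $(r_1,\dots,r_{d-1})$ and to exploit the \emph{interface factorization} of each unfolding. Writing $\mathcal{S} := \mathcal{T}(I,:,\cdots,:)$ and grouping the cores around the $i$-th cut, one gets
\[
\mathcal{T}_{\ang{i}} = \bm{L}_i \bm{R}_i, \qquad \bm{L}_i \in \mathbb{R}^{(n_1\cdots n_i)\times r_i},\quad \bm{R}_i \in \mathbb{R}^{r_i \times (n_{i+1}\cdots n_d)},
\]
where the row of $\bm{L}_i$ indexed by $(j_1,\dots,j_i)$ is $\bm{\ell}_i(j_1,\dots,j_i):=\mathcal{T}_1(:,j_1,:)\cdots\mathcal{T}_i(:,j_i,:)$ and the columns of $\bm{R}_i$ are the tails $\mathcal{T}_{i+1}(:,j_{i+1},:)\cdots\mathcal{T}_d(:,j_d,:)$. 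Since $\rank(\mathcal{T}_{\ang{i}})=r_i$ while both factors have inner dimension $r_i$, the rank of the product forces $\bm{L}_i$ to have full column rank and $\bm{R}_i$ to have full row rank $r_i$. Because restricting the first mode to $I$ only deletes rows of the unfolding, $\mathcal{S}_{\ang{i}}$ is the same product with $\bm{L}_i$ replaced by its row restriction $\bm{L}_i^{(I)}$ (rows with $j_1\in I$); and since $\bm{R}_i$ retains full row rank, right-multiplication by it preserves rank, so $\rank(\mathcal{S}_{\ang{i}})=\rank(\bm{L}_i^{(I)})$. It therefore suffices to show $\bm{L}_i^{(I)}$ has full column rank $r_i$ for every $i$.

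I would prove this by induction on $i$. The base case $i=1$ is exactly the hypothesis: $\mathcal{S}_{\ang{1}}=\mathcal{T}_{\ang{1}}(I,:)=\bm{L}_1^{(I)}\bm{R}_1$ has rank $r_1$, and $\bm{R}_1$ has full row rank, so $\bm{L}_1^{(I)}$ has full column rank $r_1$. For the step, assume $\bm{L}_{i-1}^{(I)}$ has full column rank $r_{i-1}$ and suppose $\bm{L}_i^{(I)}\bm{x}=0$ for some $\bm{x}\in\mathbb{R}^{r_i}$. The recursion $\bm{\ell}_i(j_1,\dots,j_i)=\bm{\ell}_{i-1}(j_1,\dots,j_{i-1})\,\mathcal{T}_i(:,j_i,:)$ turns $\bm{L}_i^{(I)}\bm{x}=0$, for each fixed $j_i$, into $\bm{L}_{i-1}^{(I)}\bm{y}_{j_i}=0$ with $\bm{y}_{j_i}:=\mathcal{T}_i(:,j_i,:)\bm{x}\in\mathbb{R}^{r_{i-1}}$. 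The induction hypothesis then yields $\bm{y}_{j_i}=0$ for all $j_i$, i.e. $\mathcal{T}_i(:,j_i,:)\bm{x}=0$ for every $j_i\in[n_i]$.

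It remains to conclude $\bm{x}=0$. Stacking the core slices vertically into the left core unfolding $\bm{E}_i\in\mathbb{R}^{(n_i r_{i-1})\times r_i}$, the condition reads $\bm{E}_i\bm{x}=0$, and $\bm{E}_i$ has full column rank $r_i$: if $\mathcal{T}_i(:,j_i,:)\bm{x}=0$ for all $j_i$, then every row $\bm{\ell}_{i-1}(\cdot)\,\mathcal{T}_i(:,j_i,:)$ of $\bm{L}_i$ annihilates $\bm{x}$, so $\bm{L}_i\bm{x}=0$, whence $\bm{x}=0$ because $\bm{L}_i$ has full column rank. This closes the induction, giving $\rank(\mathcal{S}_{\ang{i}})=r_i$ for all $i$, i.e. $\ttrank(\mathcal{S})=(r_1,\dots,r_{d-1})$. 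I expect the main obstacle to be bookkeeping rather than ideas: carefully defining the interface matrices $\bm{L}_i,\bm{R}_i$ and verifying that fiber-wise restriction to $I$ commutes both with the cut factorization $\mathcal{T}_{\ang{i}}=\bm{L}_i\bm{R}_i$ and with the core recursion linking $\bm{L}_{i-1}$ to $\bm{L}_i$. Once those identities are pinned down, the rank propagation is routine. A cleaner alternative that avoids an explicit TT decomposition would run the same argument through the nestedness of the column spaces of successive unfoldings, but I find the interface factorization makes the induction most transparent.
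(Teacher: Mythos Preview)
Your argument is correct and takes a genuinely different route from the paper. The paper never fixes a TT decomposition; instead it observes that since $\mathcal{R}_{\ang{1}}=\mathcal{T}_{\ang{1}}(I,:)$ has rank $r_1$, tensor CUR guarantees a mode-$1$ reconstruction $\mathcal{T}=\mathcal{R}\times_1 \BC\BU^\dagger$, which unfolds as $\mathcal{T}_{\ang{i}}=(\BC\BU^\dagger\otimes\BI)\,\mathcal{R}_{\ang{i}}$ and immediately gives $\rank(\mathcal{T}_{\ang{i}})\leq\rank(\mathcal{R}_{\ang{i}})$; the reverse inequality is the trivial submatrix bound. By contrast, you fix a minimal TT decomposition, use the interface factorization $\mathcal{T}_{\ang{i}}=\bm{L}_i\bm{R}_i$, and run an induction on $i$ to show the restricted left factor $\bm{L}_i^{(I)}$ retains full column rank. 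The paper's proof is shorter and avoids induction but imports an external CUR result; your proof is longer and requires tracking the core recursion, but is entirely self-contained linear algebra and makes explicit why rank propagates through successive cuts. Both are clean; yours would be preferable in a setting where one does not want to cite CUR machinery.
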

\begin{proof}
Set subtensor \(\mathcal{R} = \mathcal{T}(I, :,\cdots,:)\).  
Since \(\rank(\mathcal{T}_{\ang{1}}(I,:)) = r_1\), i.e., \(\rank(\mathcal{R}_{\ang{1}}) = r_1\), by the theory of tensor CUR decomposition \cite[Theorem~2,3]{cai2021mode}, there exists \(J \subseteq [n_2 n_3 \cdots n_d]\) such that
\[
\mathcal{T} = \mathcal{R} \times_1 \BC \BU^\dagger,
\]
where \(\BC = \mathcal{T}_{\ang{1}}(:, J)\) and \(\BU = \mathcal{T}_{\ang{1}}(I, J)\).
Thus, we have:
\begin{equation}\label{eqn:CUR-expansion}
    \mathcal{T}_{\ang{i}} = (\BC\BU^\dagger \otimes \BI_{d_2 \cdots d_i}) \mathcal{R}_{\ang{i}}.
\end{equation}

From \eqref{eqn:CUR-expansion}, it follows that
\begin{equation}\label{eqn:r_originalllsub}
    \rank(\mathcal{T}_{\ang{i}}) \leq \rank(\mathcal{R}_{\ang{i}}) =r_i.
\end{equation}
Since \(\mathcal{R}_{\ang{i}}\) is a submatrix of \(\mathcal{T}_{\ang{i}}\), we also have
\begin{equation}\label{eqn:r_originalggsub}
    \rank(\mathcal{T}_{\ang{i}}) \geq \rank(\mathcal{R}_{\ang{i}})=r_i.
\end{equation}
Combining \eqref{eqn:r_originalllsub} and \eqref{eqn:r_originalggsub}, we conclude that 
\[
\rank(\mathcal{T}_{\ang{i}}) = \rank(\mathcal{R}_{\ang{i}}) = r_i.
\]
for all $i\in[d-1]$. This finishes the proof. 
\end{proof}

Naturally, we can extend this result to the subtensors formed by fiber-wise sampling along the $i$-th tensor unfolding, i.e., sample along the first dimension of the $i$-th tensor unfolding of the original tensor, and preserve the subsequent part of TT rank $\{r_j\}_{j=i}^{d-1}$. This result is presented as \Cref{thm:rank-pre_i}.

\begin{corollary} \label{thm:rank-pre_i}
Let $\mathcal{T}= \mathcal{T}_1  \bullet \mathcal{T}_2\bullet   \cdots \bullet \mathcal{T}_d\in\mathbb{R}^{n_1\times n_2\times\cdots\times n_d}$ with $\ttrank(\mathcal{T})=(r_1,r_2,\cdots,r_{d-1})$. 
Given a fixed $i\in[d-1]$, suppose that the index set $I\subseteq \bigotimes_{j=1}^{i}[n_j]$ is chosen such that $\rank(\mathcal{T}_{\ang{i}}(I,:))=r_i$. Set $\mathcal{R}=(\mathcal{T}_1\bullet\cdots\bullet\mathcal{T}_i)_{\ang{i}}(I_i,:)\bullet\mathcal{T}_{i+1}\bullet\cdots\bullet\mathcal{T}_{d}$. Then $\ttrank(\mathcal{R})=(r_i,r_{i+1},\cdots,r_{d-1})$.  
\end{corollary}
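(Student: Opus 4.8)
The plan is to deduce this corollary from the base case \Cref{thm:rank-pre_1} by collapsing the first $i$ modes of $\mathcal{T}$ into a single mode, thereby reducing the statement to a first-mode sampling problem. First I would introduce the order-$(d-i+1)$ tensor $\widetilde{\mathcal{T}}\in\mathbb{R}^{(n_1\cdots n_i)\times n_{i+1}\times\cdots\times n_d}$ obtained by reshaping the first $i$ indices $(j_1,\dots,j_i)$ of $\mathcal{T}$ into a single combined index $\bar{\jmath}\in[n_1\cdots n_i]$, leaving the remaining modes untouched. Writing $\mathcal{G}:=\mathcal{T}_1\bullet\cdots\bullet\mathcal{T}_i$ as a single core of shape $1\times(n_1\cdots n_i)\times r_i$, this reshaping is exactly $\widetilde{\mathcal{T}}=\mathcal{G}\bullet\mathcal{T}_{i+1}\bullet\cdots\bullet\mathcal{T}_d$, so $\widetilde{\mathcal{T}}$ inherits a genuine TT decomposition from that of $\mathcal{T}$.

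The crucial observation is that merging consecutive modes does not alter the unfoldings taken at the surviving partition points. Concretely, for each $k\in[d-i]$ the $k$-th unfolding of $\widetilde{\mathcal{T}}$ and the $(i+k-1)$-th unfolding of $\mathcal{T}$ are the very same matrix,
\[
\widetilde{\mathcal{T}}_{\ang{k}}=\mathcal{T}_{\ang{i+k-1}}\in\mathbb{R}^{(n_1\cdots n_{i+k-1})\times(n_{i+k}\cdots n_d)},
\]
since the row multi-index $[n_1\cdots n_i]\times[n_{i+1}]\times\cdots\times[n_{i+k-1}]$ of the left-hand side is the same index set as $[n_1]\times\cdots\times[n_{i+k-1}]$ on the right. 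Taking ranks gives $\rank(\widetilde{\mathcal{T}}_{\ang{k}})=r_{i+k-1}$, hence $\ttrank(\widetilde{\mathcal{T}})=(r_i,r_{i+1},\dots,r_{d-1})$. In particular the first unfolding satisfies $\widetilde{\mathcal{T}}_{\ang{1}}=\mathcal{T}_{\ang{i}}$, so the given index set $I\subseteq\bigotimes_{j=1}^i[n_j]$, regarded as a subset $\widetilde{I}\subseteq[n_1\cdots n_i]$ of the combined first mode, satisfies $\rank(\widetilde{\mathcal{T}}_{\ang{1}}(\widetilde{I},:))=\rank(\mathcal{T}_{\ang{i}}(I,:))=r_i$.

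I would then apply \Cref{thm:rank-pre_1} to $\widetilde{\mathcal{T}}$ with index set $\widetilde{I}$: since $\rank(\widetilde{\mathcal{T}}_{\ang{1}}(\widetilde{I},:))$ equals the leading TT-rank $r_i$ of $\widetilde{\mathcal{T}}$, the theorem yields $\ttrank(\widetilde{\mathcal{T}}(\widetilde{I},:,\dots,:))=(r_i,r_{i+1},\dots,r_{d-1})$. It remains to identify the sampled tensor with $\mathcal{R}$. Restricting the combined first mode of $\widetilde{\mathcal{T}}=\mathcal{G}\bullet\mathcal{T}_{i+1}\bullet\cdots\bullet\mathcal{T}_d$ to $\widetilde{I}$ only touches the core $\mathcal{G}$, replacing it by $\mathcal{G}(:,\widetilde{I},:)$, whose matricization is precisely $(\mathcal{T}_1\bullet\cdots\bullet\mathcal{T}_i)_{\ang{i}}(I,:)$; hence $\widetilde{\mathcal{T}}(\widetilde{I},:,\dots,:)=\mathcal{R}$ and the two tensors share the same TT-rank, which completes the argument.

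I anticipate the main obstacle to be notational rather than conceptual: one must carefully justify the index-set identification in the displayed equation $\widetilde{\mathcal{T}}_{\ang{k}}=\mathcal{T}_{\ang{i+k-1}}$ and verify that the operations \emph{merge the first $i$ modes} and \emph{restrict to $I$} commute, so that $\widetilde{\mathcal{T}}(\widetilde{I},:,\dots,:)$ genuinely coincides with $\mathcal{R}$ entrywise. Once this reshaping bookkeeping is pinned down, the statement follows immediately from the base case with no further estimation required.
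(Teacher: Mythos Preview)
Your proposal is correct and follows essentially the same approach as the paper, which simply states that the result is a direct combination of the definition of TT decomposition and \Cref{thm:rank-pre_1}. You have merely spelled out the reshaping/mode-merging bookkeeping that the paper leaves implicit.
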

\begin{proof}
    This result is a direct combination of the definition of TT decomposition and \Cref{thm:rank-pre_1}.
\end{proof}
Note that \Cref{thm:rank-pre_i} does not account for potential fiber-wise sampling that occurs in the dimensions preceding the \( i \)-th dimension. Due to the sequential nature of TT decomposition, preserving all TT ranks in the formed subtensors, namely \( \mathcal{R}_i \), requires that the index pool of fibers available for sampling in the later dimensions is constrained by the indices sampled in the preceding dimensions. Specifically, 
\begin{equation} \label{eq:I_i}
\begin{cases}
        I_0=\{1\},  \\
        I_i\subseteq I_{i-1}\otimes [n_i]\subseteq\bigotimes_{j=1}^i[n_j]
\end{cases}
\end{equation}
for all $i\in[d-1]$, where $I_i$ denotes the index set sampled in $i$-th tensor unfolding. 

In \Cref{thm:mu_of_R}, we present the property inheritance, in terms of TT incoherence and condition number, for the subtensors $\mathcal{R}_i$ formed with index sets $I_i$, given that the subtensors preserve the original TT rank.

\begin{theorem} \label{thm:mu_of_R}
Let $\mathcal{T}= \mathcal{T}_1  \bullet \mathcal{T}_2\bullet   \cdots \bullet \mathcal{T}_d\in\mathbb{R}^{n_1\times n_2\times\cdots\times n_d}$ with $\ttrank(\mathcal{T})=(r_1,r_2,\cdots,r_{d-1})$ and $\{\bm{\mu}_{1,\tau},\bm{\mu}_{2,\tau}\}$-incoherence. With index sets $I_i$ defined as \eqref{eq:I_i}, 
set 
\begin{align*}
&\mathcal{R}_{i}=(\mathcal{T}_1\bullet\cdots\bullet\mathcal{T}_i)_{\ang{i}}(I_i,:)\bullet\mathcal{T}_{i+1}\bullet\cdots\bullet\mathcal{T}_{d}
\end{align*}
for $i\in[d-1]$. Note that $\mathcal{R}_{i}\in\mathbb{R}^{|I_i|\times n_{i+1}\times\cdots\times n_d}$. Suppose that $I_i$ are chosen such that $\rank((\mathcal{R}_{i})_{\ang{1}})=r_i$ for all $i$. 
Then   $\{\bm{\mu}_{1,\mathcal{R}},\bm{\mu}_{2,\mathcal{R}}\}$ satisfies the following conditions:  for all $i\in[d-1]$ and $t\in[d-i]$, 
\begin{align*}
    \mu_{1,(\mathcal{R}_i)_{\ang{t}}}&\leq \alpha_{i,t}^2\kappa^2_{\mathcal{T}_{\ang{t+i-1}}}\mu_{1,\mathcal{T}_{\ang{t+i-1}}},\\ \mu_{2,(\mathcal{R}_{i})_{\ang{t}}} &\leq \mu_{2,\mathcal{T}_{\ang{t+i-1}}},\\
    \kappa_{(\mathcal{R}_i)_{\ang{t}}} &\leq \alpha_{i,t}\sqrt{\mu_{1,\mathcal{T}_{\ang{t+i-1}}}r_{t+i-1}}\kappa_{\mathcal{T}_{\ang{t+i-1}}}, 
\end{align*}
where 
 \begin{equation}
    \label{eqn:alpha_it}
\alpha_{i,t}:=
\sqrt{\frac{|I_{i}|}{\prod_{j=1}^{i}n_j}}\left\|\BW_{\mathcal{T}_{\ang{t+i-1}}}\left(I_i\otimes\left(\bigotimes_{j=i+1}^{t+i-1}[n_j]\right),:\right)^\dagger\right\|_2.
 \end{equation}
\end{theorem}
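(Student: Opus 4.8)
The plan is to reduce the TT statement to the matrix row-submatrix bounds of \Cref{THM:BetaBound} by recognizing each unfolding $(\mathcal{R}_i)_{\ang{t}}$ as a row submatrix of a single unfolding of $\mathcal{T}$. First I would fix the index bookkeeping. The subtensor $\mathcal{R}_i\in\R^{|I_i|\times n_{i+1}\times\cdots\times n_d}$ has a first mode that collects the $I_i$-sampled combined indices of the first $i$ modes of $\mathcal{T}$, while its modes $2,\dots,t$ correspond to modes $i+1,\dots,t+i-1$ of $\mathcal{T}$. Grouping the first $t$ modes of $\mathcal{R}_i$ therefore groups exactly the first $t+i-1$ modes of $\mathcal{T}$ (after the $I_i$-restriction on the first block) and leaves the same trailing block $\bigotimes_{j=t+i}^{d}[n_j]$ as columns. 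This suggests the identification $(\mathcal{R}_i)_{\ang{t}}=\mathcal{T}_{\ang{t+i-1}}(\widetilde{I},:)$ with row index set $\widetilde{I}:=I_i\otimes\bigl(\bigotimes_{j=i+1}^{t+i-1}[n_j]\bigr)$.

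Establishing this identification cleanly is the main obstacle. The natural route is to use that $\mathcal{T}=(\mathcal{T}_1\bullet\cdots\bullet\mathcal{T}_i)_{\ang{i}}\bullet\mathcal{T}_{i+1}\bullet\cdots\bullet\mathcal{T}_d$, so the left factor $(\mathcal{T}_1\bullet\cdots\bullet\mathcal{T}_i)_{\ang{i}}$ of shape $(n_1\cdots n_i)\times r_i$ is restricted to its rows $I_i$ to form $\mathcal{R}_i$, with all remaining cores kept intact. Because the level-$(t+i-1)$ unfolding factors the TT as a left part built from cores $1,\dots,t+i-1$ times a right part, and the $I_i$-restriction only deletes rows of that left part, the entries of $(\mathcal{R}_i)_{\ang{t}}$ coincide entrywise with the rows $\widetilde{I}$ of $\mathcal{T}_{\ang{t+i-1}}$; I would verify this by chasing a single entry through both constructions. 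For the rank hypothesis required downstream, \Cref{thm:rank-pre_i} gives $\ttrank(\mathcal{R}_i)=(r_i,r_{i+1},\dots,r_{d-1})$, whose $t$-th component is $r_{t+i-1}$, so $\rank\bigl((\mathcal{R}_i)_{\ang{t}}\bigr)=r_{t+i-1}=\rank(\mathcal{T}_{\ang{t+i-1}})$; hence the full-rank-preservation condition needed for \Cref{THM:BetaBound} holds.

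With the identification and rank in hand, the conclusion follows by applying \Cref{THM:BetaBound} to $\BL=\mathcal{T}_{\ang{t+i-1}}$ with row index set $\widetilde{I}$ and $\BR=(\mathcal{R}_i)_{\ang{t}}$. The three matrix bounds transfer verbatim to $\mu_{1,(\mathcal{R}_i)_{\ang{t}}}$, $\mu_{2,(\mathcal{R}_i)_{\ang{t}}}$, and $\kappa_{(\mathcal{R}_i)_{\ang{t}}}$, with every parameter evaluated at level $t+i-1$. It then remains only to check that the $\alpha$ produced by \Cref{THM:BetaBound} equals $\alpha_{i,t}$ of \eqref{eqn:alpha_it}: since $\mathcal{T}_{\ang{t+i-1}}$ has $\prod_{j=1}^{t+i-1}n_j$ rows and $|\widetilde{I}|=|I_i|\prod_{j=i+1}^{t+i-1}n_j$, the leading factor collapses to $|\widetilde{I}|/\prod_{j=1}^{t+i-1}n_j=|I_i|/\prod_{j=1}^{i}n_j$, while the pseudoinverse term is exactly $\|\BW_{\mathcal{T}_{\ang{t+i-1}}}(\widetilde{I},:)^\dagger\|_2$. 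This matches \eqref{eqn:alpha_it}, completing the plan; the degenerate case $t=1$ (empty trailing product, $\widetilde{I}=I_i$) recovers the base hypothesis $\rank((\mathcal{R}_i)_{\ang{1}})=r_i$.
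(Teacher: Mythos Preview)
Your proposal is correct and follows essentially the same route as the paper: identify $(\mathcal{R}_i)_{\ang{t}}$ as the row submatrix $\mathcal{T}_{\ang{t+i-1}}\bigl(I_i\otimes(\bigotimes_{j=i+1}^{t+i-1}[n_j]),:\bigr)$, invoke \Cref{thm:rank-pre_i} for rank preservation, and apply \Cref{THM:BetaBound}. If anything, you are more explicit than the paper in justifying the row-submatrix identification and in verifying that the $\alpha$ from \Cref{THM:BetaBound} reduces to $\alpha_{i,t}$ after cancelling the common factor $\prod_{j=i+1}^{t+i-1}n_j$.
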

 
\begin{proof}
Firstly, let's fix a $i\in[d-1]$. 
Since the index set $I_i$ gives $\rank((\mathcal{R}_{i})_\ang{1})=r_i$, by \Cref{thm:rank-pre_i}, we have that $\rank((\mathcal{R}_i)_\ang{t})=\rank(\mathcal{T}_\ang{t+i-1})$ for $t\in[d-i]$. 
We observed that $(\mathcal{R}_{i})_{\ang{t}}$ for a fixed $t\in[d-i]$ is a row submatrix of $\mathcal{T}_{\ang{t+i-1}}$ with row index $I_i\otimes (\bigotimes_{j=i+1}^{t+i-1}[n_j])$. Applying \Cref{THM:BetaBound} and set $$\alpha_{i,t}=\sqrt{\frac{|I_{i}|}{\prod_{j=1}^{i}n_j}}\left\|\BW_{\mathcal{T}_{\ang{t+i-1}}}\left(I_i\otimes\left(\bigotimes_{j=i+1}^{t+i-1}[n_j]\right),:\right)^\dagger\right\|_2,$$ we have that
\begin{align*}
    \mu_{1,(\mathcal{R}_i)_{\ang{t}}}&\leq\alpha_{i,t}^2\kappa^2_{\mathcal{T}_{\ang{t+i-1}}}\mu_{1,\mathcal{T}_{\ang{t+i-1}}},\\ \mu_{2,(\mathcal{R}_{i})_{\ang{t}}}&\leq\mu_{2,\mathcal{T}_{\ang{t+i-1}}},\\
    \kappa_{(\mathcal{R}_i)_{\ang{t}}}&\leq \alpha_{i,t} \sqrt{\mu_{1,\mathcal{T}_{\ang{t+i-1}}}r_{t+i-1}}\kappa_{\mathcal{T}_{\ang{t+i-1}}}.
\end{align*}
The above argument applies to any $i\in[d-1]$ and $t\in[d-i]$. 
Thus, it finishes the proof.
\end{proof}

Those subtensors can be viewed as generalized row submatrices of the unfoldings. Since the ``columns'' are original, we see no amplification on the ``column'' incoherence parameter $\mu_2$ while the ``row'' incoherence $\mu_1$ can be slightly amplified due to fiber-wise sampling. The condition number can also become slightly worse through the sampling.

Now that the ``row'' samplings along the first to $(d-1)$-st dimensions have been handled, we shift the focus to the properties of subtensors related to the ``column'' samplings. 
Similar to the matrix case where rows and columns can be sampled independently, the indices of generalized column samplings on the unfoldings, namely $J_i$, can be independent of $I_i$. Specifically, 
\begin{equation} \label{eq:J_i}
J_i\subseteq\left[\prod_{j=i+1}^{d}n_j\right]
\end{equation}
for $i$-th tensor unfolding. The property inheritance for the subtensors formed by both $I_i$ and $J_i$, namely $\BC_i=(\mathcal{R}_{i-1})_{\ang{1}}(:,J_i)$, is presented as \Cref{thm:mu_of_C}. 

\begin{theorem} \label{thm:mu_of_C}
Let $\mathcal{T}= \mathcal{T}_1  \bullet \mathcal{T}_2\bullet   \cdots \bullet \mathcal{T}_d\in\mathbb{R}^{n_1\times n_2\times\cdots\times n_d}$ with $\ttrank(\mathcal{T})=(r_1,r_2,\cdots,r_{d-1})$ and $\{\bm{\mu}_{1,\tau},\bm{\mu}_{2,\tau}\}$-incoherence. With index sets $I_i$ and $J_i$ defined as \eqref{eq:I_i} and \eqref{eq:J_i} respectively, 
set
\begin{align*}
\begin{cases}
&\mathcal{R}_0=\mathcal{T},\\ 
&\mathcal{R}_{i}=(\mathcal{T}_1\bullet\cdots\bullet\mathcal{T}_i)_{\ang{i}}(I_i,:)\bullet\mathcal{T}_{i+1}\bullet\cdots\bullet\mathcal{T}_{d},\\
&\BC_i=(\mathcal{R}_{i-1})_{\ang{1}}(:,J_i)
\end{cases}
\end{align*}
for $i\in[d-1]$. Suppose that $I_i$ and $J_i$ is chosen such that $\rank((\mathcal{R}_{i})_{\ang{1}})=r_i$  and $\rank(\BC_i)=r_i$ for all $i$. 
Then it holds
\begin{align*}
\mu_{1,\BC_1}&\leq\mu_{1,\mathcal{T}_{\ang{1}}},\\   
\mu_{2,\BC_1}&\leq \beta_1^2\kappa^2_{\mathcal{T}_{\ang{1}}}\mu_{2,\mathcal{T}_{\ang{1}}}, \\
\kappa_{\BC_1}&\leq \beta_1\sqrt{\mu_{2,\mathcal{T}_{\ang{1}}}r_1}\kappa_{\mathcal{T}_{\ang{1}}}
\end{align*}
for $i=1$, and
\begin{align*}
\mu_{1,\BC_i}&\leq\alpha_i^2\beta_i^2\kappa^2_{\mathcal{T}_{\ang{i}}}r_i\mu_{1,\mathcal{T}_{\ang{i}}}\mu_{2,\mathcal{T}_{\ang{i}}}, \\
    \mu_{2,\BC_i}&\leq \beta_i^2\kappa^2_{\mathcal{T}_{\ang{i}}}\mu_{2,\mathcal{T}_{\ang{i}}}, \\
    \kappa_{\BC_i}
    &\leq \alpha_i\beta_i\sqrt{\mu_{1,\mathcal{T}_{\ang{i}}}\mu_{2,\mathcal{T}_{\ang{i}}}}r_i\kappa_{\mathcal{T}_{\ang{i}}},
\end{align*}
for $2\leq i\leq d-1$,
where 
\begin{equation}\label{eqn:alpha_beta}
\begin{aligned}
\alpha_i&:=\sqrt{\frac{|I_{i-1}|}{\prod_{j=1}^{i-1}n_j}}\left\|\BW_{\mathcal{T}_{\ang{i}}}(I_{i-1}\otimes[n_i],:)^\dagger\right\|_2,\\
\beta_i&:=\sqrt{\frac{|J_i|}{\prod_{j=i+1}^dn_{j}}}\left\|\BV_{\mathcal{T}_{\ang{i}}}(J_i,:)^\dagger\right\|_2.
\end{aligned}
\end{equation}
Note that $\alpha_1$ is not used in the theorem; however, we can take  $\alpha_1=1$ since $\mu_{1,\BC_1}\leq\mu_{1,\mathcal{T}_{\ang{1}}}$ with an implicit $\alpha_1$.
\end{theorem}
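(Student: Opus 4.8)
The plan is to recognize each $\BC_i$ as an honest row-and-column submatrix of the unfolding $\mathcal{T}_{\ang{i}}$ and then to invoke \Cref{THM:BetaBound} twice, once for the selected columns and once for the selected rows. First I would establish the identification
\[
\BC_i=\mathcal{T}_{\ang{i}}\big(I_{i-1}\otimes[n_i],\,J_i\big).
\]
This comes from the TT factorization $\mathcal{T}_{\ang{i-1}}=\BL_{i-1}\BR_{i-1}$ with $\BL_{i-1}=(\mathcal{T}_1\bullet\cdots\bullet\mathcal{T}_{i-1})_{\ang{i-1}}$ and $\BR_{i-1}=(\mathcal{T}_i\bullet\cdots\bullet\mathcal{T}_d)_{\ang{1}}$: restricting the leading factor of $\mathcal{R}_{i-1}$ to the rows $I_{i-1}$ while leaving the $i$-th physical mode intact means that unfolding $\mathcal{R}_{i-1}$ with its first two modes (of sizes $|I_{i-1}|$ and $n_i$) as rows reproduces exactly the rows of $\mathcal{T}_{\ang{i}}$ indexed by $I_{i-1}\otimes[n_i]$; the subsequent column selection $J_i$ then yields $\BC_i$. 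For $i=1$ we have $I_0\otimes[n_1]=[n_1]$, so $\BC_1=\mathcal{T}_{\ang{1}}(:,J_1)$ is a pure column submatrix; this explains why its estimates are precisely the column half of \Cref{THM:BetaBound} and why $\alpha_1$ may be set to $1$.

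For $2\le i\le d-1$ I would factor the sampling as $\mathcal{T}_{\ang{i}}\mapsto\BC_i':=\mathcal{T}_{\ang{i}}(:,J_i)\mapsto\BC_i=\BC_i'\big(I_{i-1}\otimes[n_i],:\big)$. Because $r_i=\rank(\BC_i)\le\rank(\BC_i')\le\rank(\mathcal{T}_{\ang{i}})=r_i$, the intermediate matrix $\BC_i'$ is itself rank $r_i$, so both invocations of \Cref{THM:BetaBound} are valid. The column step applies the $\beta$-half of \Cref{THM:BetaBound} to $\mathcal{T}_{\ang{i}}$ with column index $J_i$; its parameter is exactly the $\beta_i$ of \eqref{eqn:alpha_beta}, yielding $\mu_{1,\BC_i'}\le\mu_{1,\mathcal{T}_{\ang{i}}}$, $\mu_{2,\BC_i'}\le\beta_i^2\kappa^2_{\mathcal{T}_{\ang{i}}}\mu_{2,\mathcal{T}_{\ang{i}}}$, and $\kappa_{\BC_i'}\le\beta_i\sqrt{\mu_{2,\mathcal{T}_{\ang{i}}}r_i}\,\kappa_{\mathcal{T}_{\ang{i}}}$.

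The row step applies the $\alpha$-half of \Cref{THM:BetaBound} to $\BC_i'$ with row index $I_{i-1}\otimes[n_i]$, and this is the step I expect to be the main obstacle. \Cref{THM:BetaBound} produces a parameter built from $\BW_{\BC_i'}$, whereas the claimed $\alpha_i$ in \eqref{eqn:alpha_beta} is built from $\BW_{\mathcal{T}_{\ang{i}}}$, so the two must be reconciled. I would do this using the fact that rank-preserving column sampling does not change the column space: since $\rank(\BC_i')=\rank(\mathcal{T}_{\ang{i}})=r_i$, the columns of $\BC_i'$ span the same $r_i$-dimensional subspace as those of $\mathcal{T}_{\ang{i}}$, whence $\BW_{\BC_i'}=\BW_{\mathcal{T}_{\ang{i}}}\bm{Q}$ for some orthogonal $\bm{Q}\in\R^{r_i\times r_i}$. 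Orthogonal invariance then gives $\|\BW_{\BC_i'}(I_{i-1}\otimes[n_i],:)^\dagger\|_2=\|\BW_{\mathcal{T}_{\ang{i}}}(I_{i-1}\otimes[n_i],:)^\dagger\|_2$, and since $|I_{i-1}\otimes[n_i]|/\prod_{j=1}^i n_j=|I_{i-1}|/\prod_{j=1}^{i-1}n_j$ the dimensional prefactor also matches, so the row-step parameter is exactly $\alpha_i$. Composing the two families of inequalities, i.e. substituting the column-step bounds on $(\mu_{1,\BC_i'},\mu_{2,\BC_i'},\kappa_{\BC_i'})$ into the row-step bounds $\mu_{2,\BC_i}\le\mu_{2,\BC_i'}$, $\mu_{1,\BC_i}\le\alpha_i^2\kappa^2_{\BC_i'}\mu_{1,\BC_i'}$, and $\kappa_{\BC_i}\le\alpha_i\sqrt{\mu_{1,\BC_i'}r_i}\,\kappa_{\BC_i'}$, and simplifying via $\sqrt{r_i}\cdot\sqrt{r_i}=r_i$ and $\sqrt{\mu_1}\sqrt{\mu_2}=\sqrt{\mu_1\mu_2}$, reproduces the stated estimates for $2\le i\le d-1$; the $i=1$ bounds are precisely the column step already carried out.
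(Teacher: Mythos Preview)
Your proposal is correct and follows essentially the same approach as the paper: identify $\BC_i=\mathcal{T}_{\ang{i}}(I_{i-1}\otimes[n_i],J_i)$, then apply \Cref{THM:BetaBound} first to the column selection $J_i$ and then to the row selection $I_{i-1}\otimes[n_i]$, and compose the resulting bounds. Your write-up is in fact slightly more complete than the paper's own proof, since you explicitly verify that the intermediate matrix $\BC_i'=\mathcal{T}_{\ang{i}}(:,J_i)$ has rank $r_i$ and, via the orthogonal change of basis $\BW_{\BC_i'}=\BW_{\mathcal{T}_{\ang{i}}}\bm{Q}$, justify why the row-step parameter may be written with $\BW_{\mathcal{T}_{\ang{i}}}$ rather than $\BW_{\BC_i'}$; the paper simply asserts the formula for $\alpha_i$ without this reconciliation.
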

\begin{proof}
Firstly, let's consider the case of $i=1$. That is, $\BC_1=\mathcal{T}_{\ang{1}}(:,J_1)$. Since $J_1$ is chose such that $\rank(\BC_1)=r_1$,  by applying \Cref{THM:BetaBound}, we have that
\begin{align*}
    \mu_{1,\BC_1}&\leq\mu_{1,\mathcal{T}_{\ang{1}}},\\
    \mu_{2,\BC_1}&\leq \beta_1^2\kappa^2_{\mathcal{T}_{\ang{1}}}\mu_{2,\mathcal{T}_{\ang{1}}}, \\
    \kappa_{\BC_1}&\leq \beta_1\sqrt{\mu_{2,\mathcal{T}_{\ang{1}}}r_1}\kappa_{\mathcal{T}_{\ang{1}}},
\end{align*}
where \begin{equation*}
    \beta_1:=\sqrt{\frac{|J_1|}{|\prod_{j=2}^{d}n_j|}}\left\|\BV_{\mathcal{T}_{\ang{1}}}(J_1,:)^\dagger\right\|_2.
\end{equation*} 
 
Next, we consider the cases of $2\leq i\leq d-1$. Notice that $\BC_i=(\mathcal{R}_{i-1})_{\ang{1}}(:,J_i)=\mathcal{T}_{\ang{i}}(I_{i-1}\otimes [n_i],J_i)$. Additionally, we are given $\rank(\BC_i)=r_i$ with the chosen $J_i$. 
By applying \Cref{THM:BetaBound}, we have that
\begin{align*}
\mu_{1,\mathcal{T}_{\ang{i}}(:,J_i)}&\leq\mu_{1,\mathcal{T}_{\ang{i}}},\\
\mu_{2,\mathcal{T}_{\ang{i}}(:,J_i))}&\leq \beta_i^2\kappa^2_{\mathcal{T}_{\ang{i}}}\mu_{2,\mathcal{T}_{\ang{i}}},\\
\kappa_{\mathcal{T}_{\ang{i}}(:,J_i)}&\leq\beta_i\sqrt{\mu_{2,\mathcal{T}_{\ang{i}}}r_i}\kappa_{\mathcal{T}_{\ang{i}}},
\end{align*}
where $\beta_i=\sqrt{\frac{|J_i|}{\prod_{j=i+1}^dn_{j}}}\left\|\BV_{\mathcal{T}_{\ang{i}}}(J_i,:)^\dagger\right\|_2$. 

Invoking \Cref{THM:BetaBound} again, we have that
\begin{align*}
    \mu_{1,\BC_i}&\leq\alpha_i^2\kappa^2(\mathcal{T}_{\ang{i}}(:,J_i))\mu_{1,\mathcal{T}_{\ang{i}}(:,J_i)} \\  &\leq\alpha_i^2\beta_i^2\kappa^2_{\mathcal{T}_{\ang{i}}}r_i\mu_{1,\mathcal{T}_{\ang{i}}}\mu_{2,\mathcal{T}_{\ang{i}}},
\end{align*}

\vspace{-0.2in}
\begin{align*}
    \mu_{2,\BC_i}&\leq \mu_{2,\mathcal{T}_{\ang{i}}(:,J_i)}\\
    &\leq\beta_i^2\kappa^2_{\mathcal{T}_{\ang{i}}}\mu_{2,\mathcal{T}_{\ang{i}}},
\end{align*}
and
\begin{align*}
    \kappa_{\BC_i}&\leq\alpha_i\sqrt{\mu_1(\mathcal{T}_{\ang{i}}(:,J_i))r_i}\kappa_{\mathcal{T}_{\ang{i}}(:,J_i)}\\
    &\leq \alpha_i\beta_i\sqrt{\mu_1(\mathcal{T}_{\ang{i}})\mu_2(\mathcal{T}_{\ang{i}})}r_i\kappa(\mathcal{T}_{\ang{i}}),
\end{align*}
where $\alpha_i=\sqrt{\frac{|I_{i-1}|}{\prod_{j=1}^{i-1}n_j}}\left\|\BW_{\mathcal{T}_{\ang{i}}}(I_{i-1}\otimes[n_i],:)^\dagger\right\|_2$. This completes the proof.
\end{proof}

\begin{remark} \label{rmk:parameters}
    The bounds of property inheritance presented in \Cref{thm:mu_of_R,thm:mu_of_C} rely on the parameters $\alpha_{i,t}$, $\alpha_i$, and $\beta_i$ as defined in \eqref{eqn:alpha_it} and \eqref{eqn:alpha_beta}, which involvesthe Moore-Penrose pseudoinverses of subsampled left and right singular vectors of the unfoldings. The spectral bounds of these pseudoinverses highly depend on how the index sets $I_i$ and $J_i$ are sampled, and some sampling strategies may improve those parameters. This has been thoroughly discussed for matrix setting in \cite{cai2021robust}. However, due to page limits, we will only empirically verify those parameters (see \Cref{sec:numerical}) and reserve the theoretical discussion for future work.
\end{remark}

\section{Numerical Evaluation} \label{sec:numerical}

\begin{figure*}[ht]
    \centering
\includegraphics[width=0.329\linewidth]{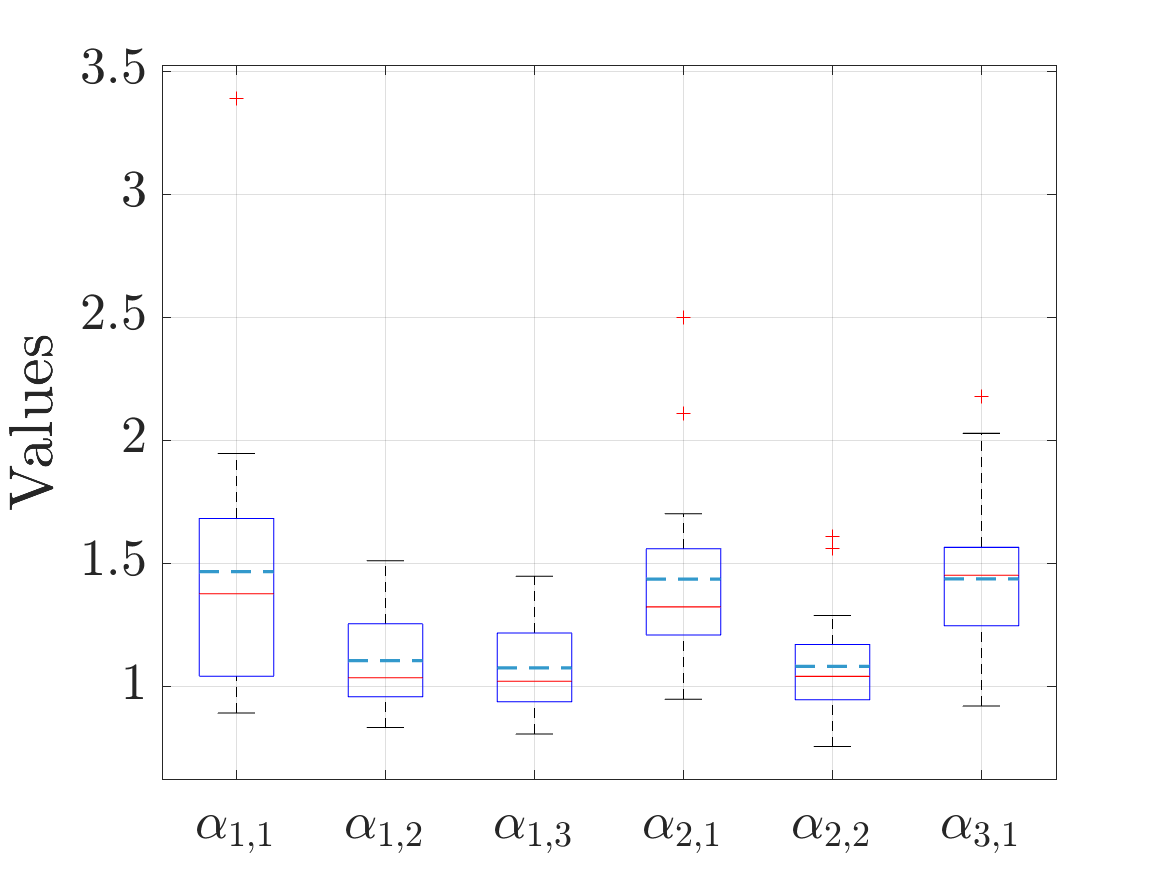}
\includegraphics[width=0.329\linewidth]{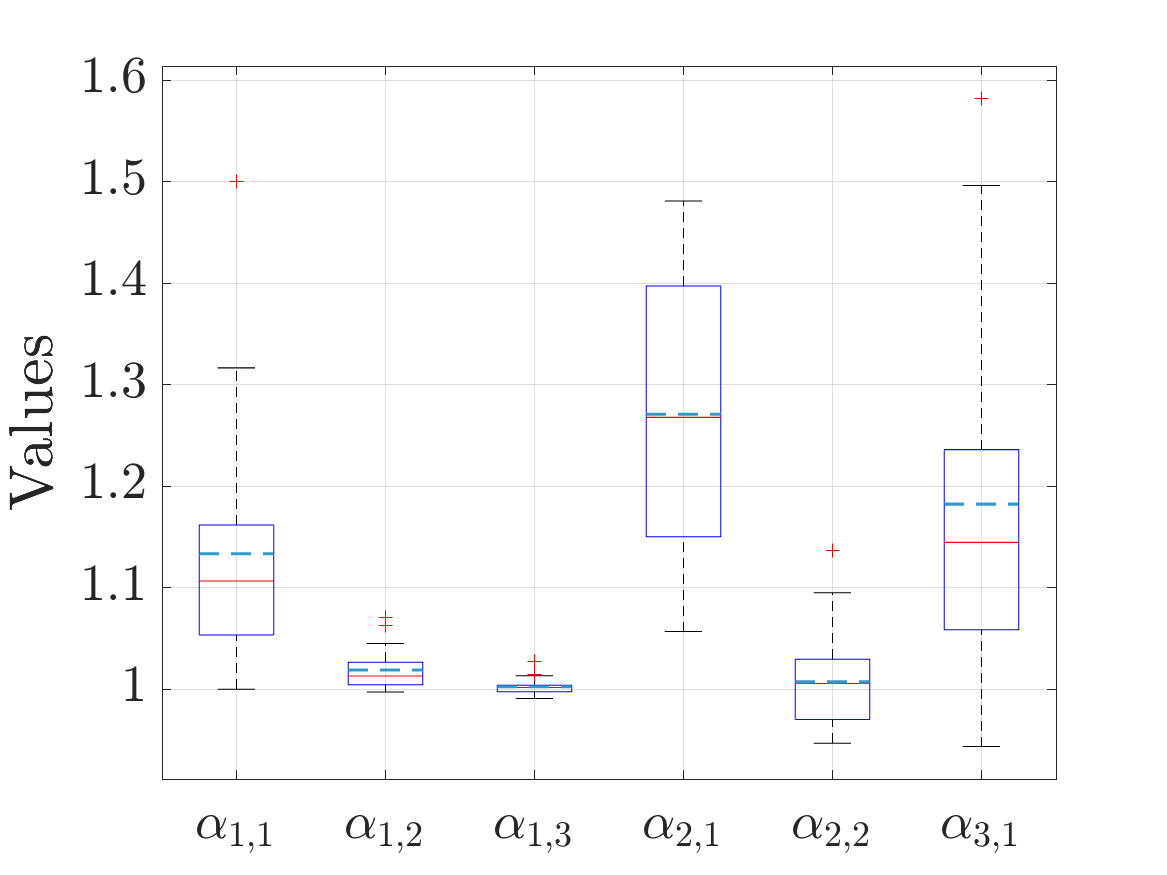}
\includegraphics[width=0.329\linewidth]{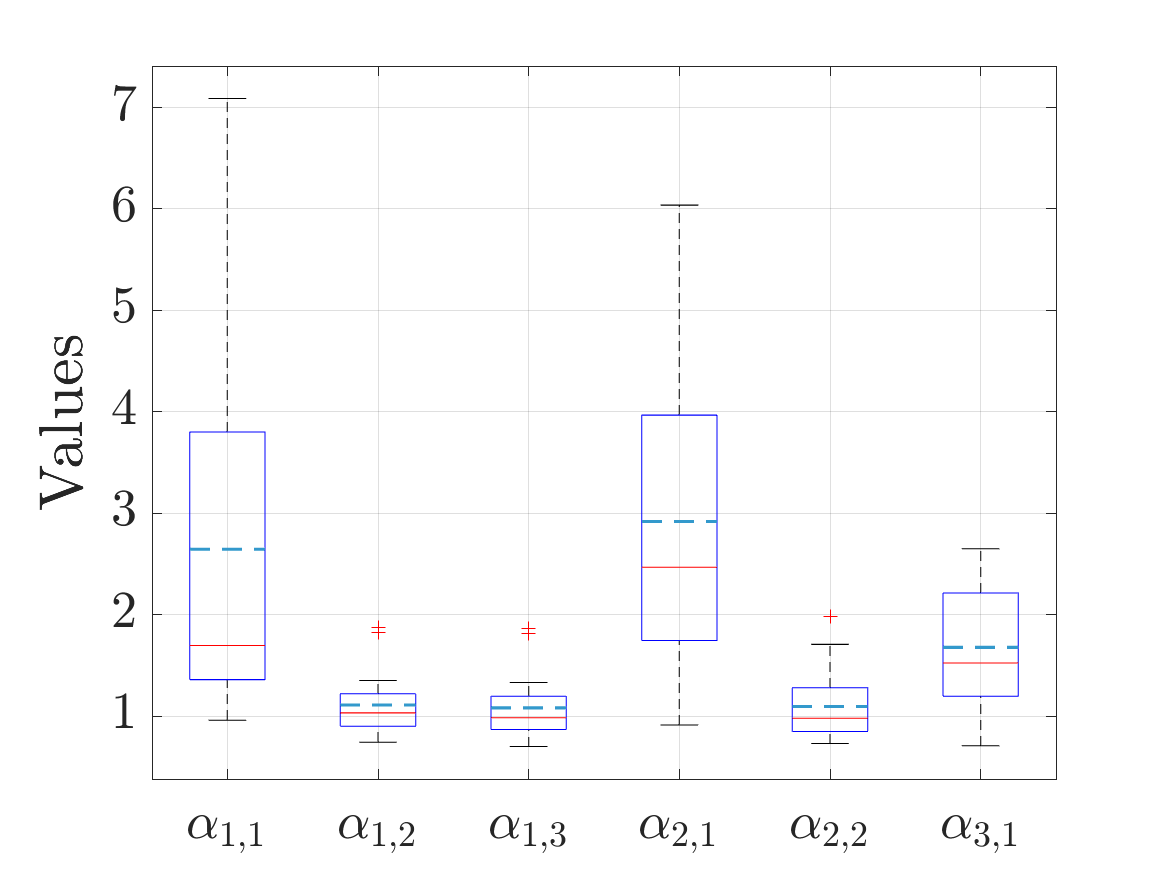}
    \caption{Boxplot for $\alpha_{i,t}$ as introduced in \Cref{thm:mu_of_R}. Each box represents the distribution of parameter values over 20 trials, showing the median (center line), interquartile range (box), and potential outliers ({red $+$}). The whiskers (top and bottom horizontal lines) extend to the most extreme data points within 1.5 times the interquartile range. The dashed blue line indicates the mean of the parameter values.  {\bf Left:} Gaussian generation; {\bf Middle:} Hadamard  generation; {\bf Right:} Uniform generation.}
    \label{fig:thmR}
\end{figure*}
\begin{figure*}[ht]
    \centering
\includegraphics[width=0.329\linewidth]{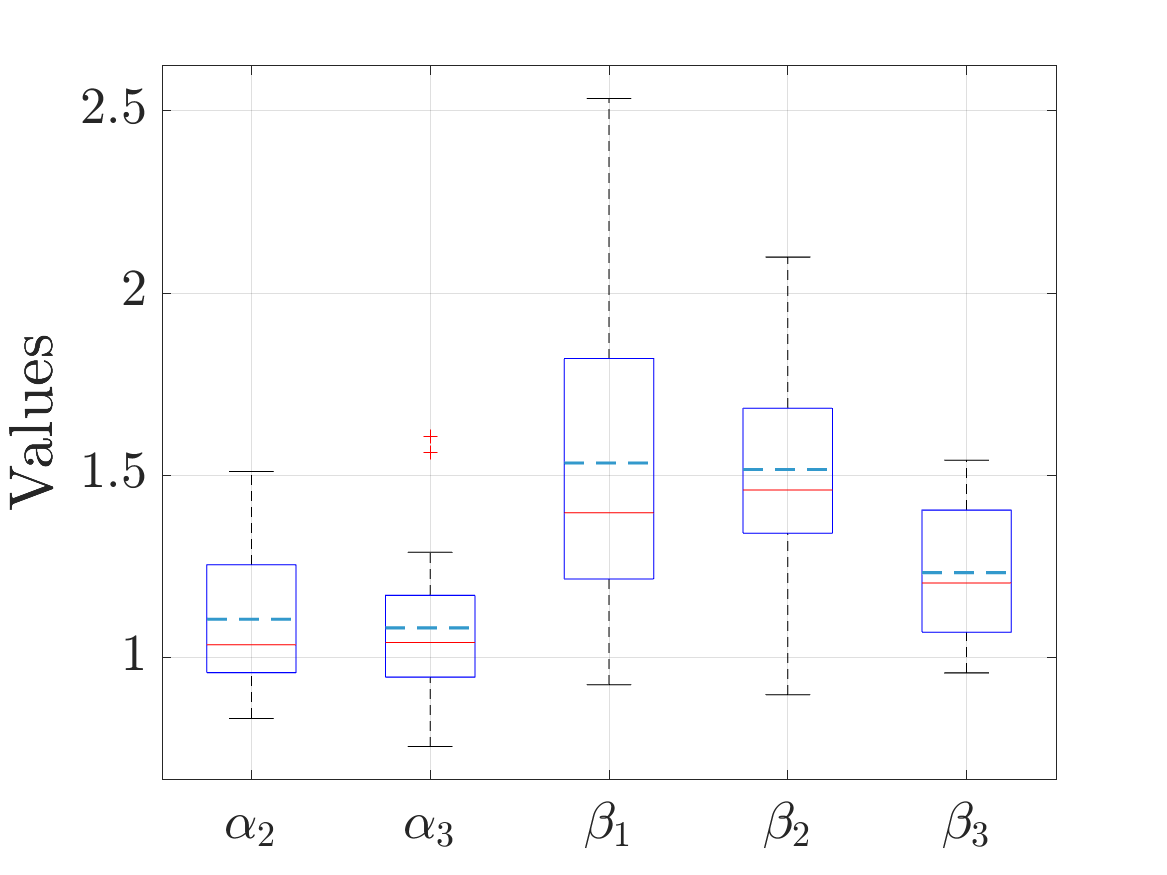}
\includegraphics[width=0.329\linewidth]{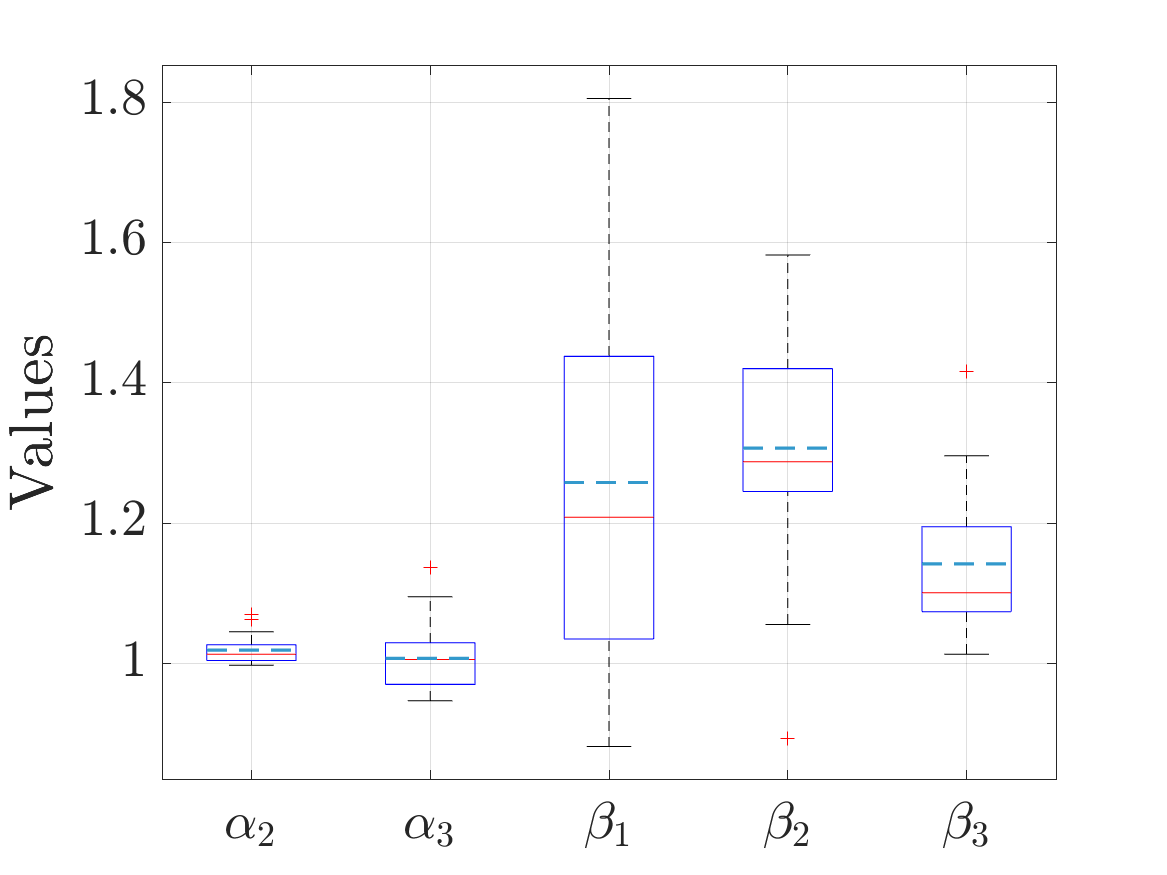}
\includegraphics[width=0.329\linewidth]{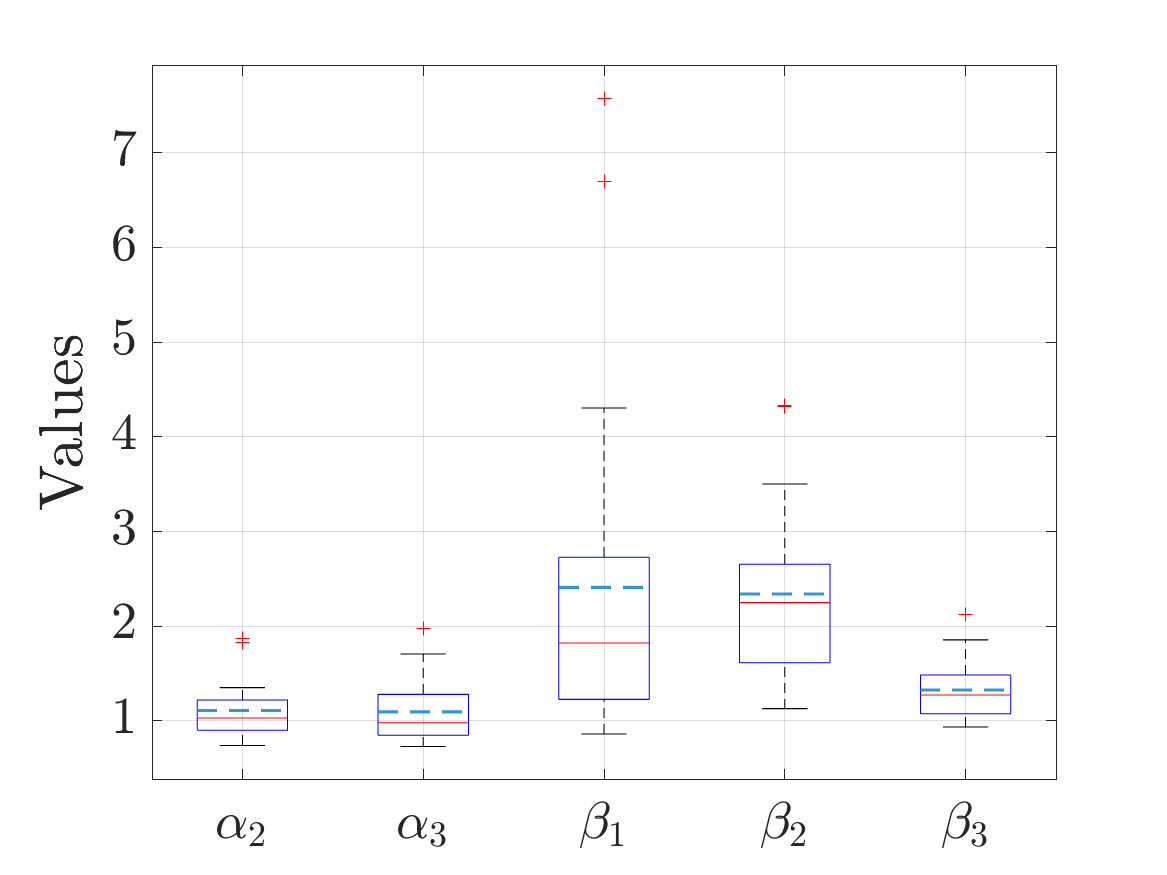}
    \caption{Boxplot for $\alpha_i$ and $\beta_i$ as introduced in \Cref{thm:mu_of_C}. The setup of the boxplot is the same as \Cref{fig:thmR}.  {\bf Left:} Gaussian generation; {\bf Middle:} Hadamard generation; {\bf Right:} Uniform generation.}
    \label{fig:thmC}
\end{figure*}

The values of the parameters \( \alpha_{i,t} \), \( \alpha_i \), and \( \beta_i\), as introduced in \Cref{thm:mu_of_C,thm:mu_of_R}, are key to controlling the essential properties of the subtensors. As discussed in \Cref{rmk:parameters}, they rely on the sampling method for the index sets. 
This section numerically evaluates these parameters with one of the most simple yet popular methods---\textit{uniform sampling without replacement}. 
All experiments are implemented on Matlab R2022b and executed on a laptop equipped with Intel i7-11800H CPU 
and 16GB DDR4 RAM.  

For test data, we generate four-dimensional tensors \( \mathcal{T} \in \mathbb{R}^{100 \times 100 \times 100 \times 100} \) with \( \text{TT-rank}(\mathcal{T}) = (r_1, r_2, r_3) = (2, 3, 2) \) using the following three random methods:
\begin{itemize}
    \item \textbf{Gaussian generation:} Set \( \mathcal{T} = \mathcal{T}_1 \bullet \mathcal{T}_2 \bullet \mathcal{T}_3 \bullet \mathcal{T}_4 \), where each entry of \( \mathcal{T}_i \in \mathbb{R}^{r_{i-1} \times 100 \times r_i} \) is independently sampled from a Gaussian distribution with mean 0 and variance 1.
    \item \textbf{Hadamard generation:} Set \( \mathcal{T} = \mathcal{T}_1 \bullet \mathcal{T}_2 \bullet \mathcal{T}_3 \bullet \mathcal{T}_4 \), where each entry of \( \mathcal{T}_i \in \{-1, 1\}^{r_{i-1} \times 100 \times r_i} \) is sampled independently with equal probability, i.e., 50\% chance for -1 and 50\% chance for 1.
        \item \textbf{Uniform  generation:} Set \( \mathcal{T} = \mathcal{T}_1 \bullet \mathcal{T}_2 \bullet \mathcal{T}_3 \bullet \mathcal{T}_4 \), where each entry of \( \mathcal{T}_i \in [0, 1]^{r_{i-1} \times 100 \times r_i} \) is sampled independently from a uniform distribution over the interval \([0, 1]\).
\end{itemize}
For every tensor \( \mathcal{T} \) generated using the above methods, we compute the left and right singular matrices \( \BW_{\mathcal{T}_\ang{i}} \) and \( \BV_{\mathcal{T}_\ang{i}} \) for each unfolding. The index sets \( I_i \subseteq I_{i-1} \otimes [100] \) with \( I_0 = \{1\} \) and \( J_i \subseteq [100^{4-i}] \) are sampled uniformly without replacement. The values of $\alpha_{i,t}$, $\alpha_i$ and $\beta_i$ are then calculated according to \eqref{eqn:alpha_it} and \eqref{eqn:alpha_beta}.

For each problem setup, we repeat the experiment 20 times. The results are reported as boxplots in \Cref{fig:thmR}, where the legend is detailed in the figure caption. 
Similarly, \Cref{fig:thmC} reports the boxplots for \( \alpha_i \) and \( \beta_i \).

As shown in \Cref{fig:thmR,fig:thmC}, the values of \( \alpha_{i,t} \), \( \alpha_i \), and \( \beta_i \) are relatively small with high confidence across all experiments. This empirical observation demonstrates that even with indices generated by simple uniform sampling without replacement, the subtensors well preserve the essential properties of the original tensor, i.e., incoherence and condition number, to some extent for all three random data generation methods.

\section{Conclusion and Future Work}
This paper presents a pilot study of property inheritance in subtensors formed by fiber-wise sampling under the tensor train (TT) setting. By focusing on subtensors that preserve the TT rank of the original tensor, we establish theoretical results that elucidate the inheritance of essential tensor properties such as incoherence and condition number. We numerically evaluate the values of key parameters from the theorems and show that the tensor properties are well preserved with simple uniform sampling without replacement. This paper provides a deeper understanding of the relationships between tensors and their subtensors, offering valuable analytic tools for advancing efficient and scalable tensor analysis. 

Future work will include a detailed theoretical discussion of the bounds for key parameters with different sampling strategies. 
Utilizing those properties of subtensors, we aim to further optimize tensor dimensionality reduction methods. 
Additionally, we plan to extend the results to other tensor decomposition frameworks. 


\section*{Acknowledgment}
This work was partially supported by NSF DMS 2304489. The authors contributed equally.

\bibliographystyle{IEEEtran}
\bibliography{ref}

\end{document}